\newcommand\numberthis{\addtocounter{equation}{1}\tag{\theequation}}
\newcommand{\norm}[1]{\left\lVert#1\right\rVert}
\newcommand{\R}{\mathbb{R}}
\newcommand{\vw}{\mathbf{w}}
\newcommand{\vx}{\mathbf{x}}
\newcommand{\cL}{\mathcal{L}}
\DeclareMathOperator*{\tr}{tr}
\DeclareMathOperator*{\sign}{sign}
\DeclareMathOperator{\shrink}{shrink}
\DeclareMathOperator*{\argmin}{argmin}
\DeclareMathOperator{\diag}{diag}
\DeclareMathOperator*{\vect}{vec}
\DeclareMathOperator*{\mat}{mat}
\DeclareMathOperator*{\erf}{erf}
\DeclareMathOperator*{\RMSE}{RMSE}
\providecommand{\keywords}[1]{\textbf{Keywords: } #1}
\begin{document}
\title*{Time-Varying Graph Signal Recovery Using High-Order Smoothness and Adaptive Low-rankness}
\titlerunning{Smooth and Low-Rank Time-Varying Graph Signal Recovery}
\author{Weihong Guo\and Yifei Lou \and Jing Qin \and Ming Yan}
\institute{
Weihong Guo
\at Department of Mathematics, Applied Mathematics, and Statistics\\
Case Western Reserve University\\
Cleveland, OH 44106\\
\email{wxg49@case.edu}
\and
Yifei Lou
\at  Department of Mathematics \& School of Data Science and Society\\
University of North Carolina at Chapel Hill\\
Chapel Hill, NC 27599\\
\email{yflou@unc.edu}
\and
Jing Qin \at Department of Mathematics\\
University of Kentucky\\
Lexington, KY 40506\\
 \email{jing.qin@uky.edu}
\and
Ming Yan \at School of Data Science\\
The Chinese University of Hong Kong, Shenzhen (CUHK-Shenzhen)\\
Shenzhen, China 518172\\
\email{yanming@cuhk.edu.cn}
}

\maketitle

\abstract{
Time-varying graph signal recovery has been widely used in many applications, including climate change, environmental hazard monitoring, and epidemic studies. It is crucial to choose appropriate regularizations to describe the characteristics of the underlying signals, such as the smoothness of the signal over the graph domain and the low-rank structure of the spatial-temporal signal modeled in a matrix form. As one of the most popular options, the graph Laplacian is commonly adopted in designing graph regularizations for reconstructing signals defined on a graph from partially observed data. In this work, we propose a time-varying graph signal recovery method based on the high-order Sobolev smoothness and an error-function weighted nuclear norm regularization to enforce the low-rankness. Two efficient algorithms based on the alternating direction method of multipliers and iterative reweighting are proposed, and convergence of one algorithm is shown in detail. We conduct various numerical experiments on synthetic and real-world data sets to demonstrate the proposed method's effectiveness compared to the state-of-the-art in graph signal recovery.
}

\keywords{
Time-varying graph signal, Sobolev smoothness, weighted nuclear norm, error function, low-rank
}

\section{Introduction}
Many real-world datasets are represented in the form of graphs, such as sea surface temperatures, Covid-19 cases at regional or global levels, and PM 2.5 levels in the atmosphere. Graphs play a crucial role in data science, facilitating the mathematical modeling of intricate relationships among data points. Typically composed of vertices with either undirected or directed edges, graphs regard each data point as a vertex and use edges to represent pairwise connections in terms of distances or similarities. A graph signal is a collection of values defined on the vertex set. The graph structure can be either provided by specific applications or learned from partial or complete datasets.

As an extension of (discrete) signal processing, graph signal processing \cite{shuman2013emerging} has become an emerging field in data science and attracted tremendous attention due to its capability of dealing with big data with irregular and complex graph structures from various applications, such as natural language processing~\cite{mills2013graph}, traffic prediction~\cite{tremblay2014graph}, climate change monitoring~\cite{sandryhaila2013discrete}, and epidemic prediction~\cite{giraldo2020minimization}. Graph signal recovery aims to recover a collection of signals with certain smoothness assumptions defined on a graph from partial and/or noisy observations. Unlike signals defined in traditional Euclidean spaces, the intricate geometry of the underlying graph domain must be considered when processing and recovering graph signals. Graph signals typically exhibit smoothness either locally or globally over the graph.

There are some challenges in graph signal recovery when exploiting the underlying graph structure to improve signal reconstruction accuracy. First, the topology of a graph desires a comprehensive representation involving many graph components, such as structural properties, connectivity patterns, vertex/edge density, and distribution. Second, it may be insufficient to describe the smoothness of graph signals by simply restricting the similarity of signal values locally. Moreover, the growth of graph size leads to a significant computational burden. To address them, various techniques have been developed, including graph-based regularization methods~\cite{chen2015signal, kroizer2019modeling,li2023robust, chen2024manifold}, spectral graph theory~\cite{varma2015spectrum,domingos2020graph,ozturk2021optimal,ying2021spectral}, and optimization algorithms~\cite{berger2017graph,jiang2020recovery}.

\subsection{Time-Varying Graph Signal Recovery}
A time-varying or spatial-temporal graph signal can be considered as a sequence of signals arranged chronologically, where each signal at a specific time instance is defined on a static or dynamically changing spatial graph.

Consider an undirected unweighted graph $G=(V, E),$ where $V$ is a set of $n$ vertices and $E$ is a set of edges. We assume a collection of time-varying graph signals $\{\vx_t\}_{t=1,\ldots,m}$ with $\vx_t\in\R^n$  are defined on $V$ with a time index $t$. Let $X=[\vx_1,\ldots,\vx_m]\in\R^{n\times m}$ be the data set represented in matrix. The pairwise connections on the graph $G$ can be modeled by an adjacency matrix $A$, where the $(i,j)$-th entry of $A$ is one if there is an edge between vertices $i$ and $j$, and zero otherwise. This binary adjacency matrix can be extended to the non-binary case for a weighted graph, where each entry indicates the similarity between two vertices. Throughout the paper, we use a standard $k$ nearest neighbor (kNN) approach based on the Euclidean distance of data points to construct the adjacency matrix.

Given an adjacency matrix $A$, we further define the graph Laplacian matrix, $L = M - A \in\mathbb{R}^{n\times n},$ where $M$ is a diagonal matrix with its diagonal element ${M}_{ii} = \sum_j {A}_{ij}$. The graph Laplacian serves as a matrix representation of the graph structure and can be used to describe some important characteristics of a graph, such as node connectivity and similarity. For example, geographic locations in the form of coordinates, i.e., longitude and latitude, are typically used to calculate the pairwise distance and, thereby, the graph Laplacian for geospatial data. For some data sets without obvious graph domains, a preprocessing step of graph learning can be implemented; see~\cite{xia2021graph} for a comprehensive review of graph learning techniques.

Time-varying graph signal recovery aims to recover an underlying matrix from its partially observed entries that are possibly polluted by additive noise. Mathematically, a forward model is $Y = J\circ X + \mathcal{N}$, where $Y$ is the observed data, $J \in\{0,1\}^{n\times m}$ is a sampling matrix, and $\mathcal{N}$ is a random noise. In this work, we focus on recovering time-varying signals, represented by the matrix $X$, from incomplete noisy data $Y$ defined on static spatial graphs in the sense that the vertex set and the edges do not change over time. In addition, we adopt a symmetrically normalized graph Laplacian that is pre-computed based on geographic locations.

\subsection{Related Works}\label{sec:prelim}
The recovery of graph signals from partial observations is an ill-posed problem due to missing data. Graph regularization plays a crucial role in developing a recovery model for time-varying signals by enforcing temporal correlation and/or describing the underlying graph topology.
An intuitive approach for recovering time-varying graph signals is to apply interpolation methods to fill in the missing entries, such as natural neighborhood interpolation (NNI)~\cite{sibson1981brief}. Numerous recovery models with diverse smoothness terms have been proposed to further preserve the underlying geometry. For example, Graph Smoothing (GS)~\cite{narang2013localized} characterizes the smoothness of the signal using the graph Laplacian of $X$. Alternatively, temporal smoothness is incorporated in Time-Varying Graph Signal Recovery (TGSR)~\cite{qiu2017time} by formulating the graph Laplacian of $DX,$ where $D$ is a first-order temporal difference operator. The combination of the
graph Laplacian of $X$ and the Tikhonov regularity of $DX$ was considered in~\cite{perraudin2017towards}. In contrast, the graph Laplacian of $DX$ with an additional low-rank regularity of $X$ was formulated as Low-Rank Differential Smoothness (LRDS)~\cite{mao2018spatio}. In the Tikhonov regularization, $\norm{XD}_F^2=\tr(XDD^TX^T)$ implies that $DD^T$ is treated as the temporal graph Laplacian. In~\cite{giraldo2022reconstruction}, the graph Laplacian matrix $L$ is replaced by $(L+\epsilon I)^{r}$, where $I$ is the identity matrix and $r\geq1$ for a high-order Sobolev spatial-temporal smoothness. Its main advantage lies in faster convergence, as this approach does not necessitate extensive eigenvalue decomposition or matrix inversion. Recently, another low-rank and graph-time smoothness (LRGTS) method has been proposed in~\cite{liu2023time}, where the sum of the nuclear norm and the Tikhonov regularizer on the second-order temporal smoothness is adopted to promote the low-rankness and the temporal smoothness, respectively.

All the models mentioned above can be condensed into one minimization framework:
\begin{equation}\label{eqn:univModel}
\begin{aligned}
\min_{X}~\frac12\norm{Y-J\circ X}_F^2+\frac{\alpha}2\tr(D_\theta^TX^TL_sXD_{\theta})+\beta R(X) + \frac{\gamma}2 \tr(XL_tX^T),
\end{aligned}
\end{equation}
where $D_{\theta}$ is a $\theta$-th order temporal difference operator, $L_s$ and $L_r$ are the spatial and temporal graph Laplacian matrices, respectively, $R(X)$ is the regularization term applied to $X$ describing its characteristics, and $\alpha\geq 0,~\beta\geq 0,~\gamma\geq 0$ are three parameters. Two common choices of $\theta$ are (1) $\theta = 0$ that corresponds to $D_\theta=I$ and (2) $\theta =1$ used in TGSR. Additionally, $L_s$ can be a transformed version of the classical graph Laplacian $L$, e.g., $\tilde{L}=(L+\epsilon I)^r$ in the Sobolev method \cite{giraldo2022reconstruction}, where $\epsilon>0$ and $r\geq 1$, which can be non-integer. The temporal graph Laplacian can be constructed by using the $\tau$-th order temporal difference operator, i.e., $L_t=D_\tau D_{\tau}^T$, for which case the temporal Laplacian can be expressed via the Frobenious norm $\tr(XD_{\tau}D_{\tau}^TX^T)=\norm{XD_{\tau}}_F^2$ (see Tikhonov with $\tau=1$ and LRGTS with $\tau=2$). The regularization $R(X)$ can be chosen as the nuclear norm of $X$ if the underlying time-varying graph signal $X$ is low rank.
Various models utilize different choices of $D_\theta, L_s/\tilde{L}, L_t$, and the regularization $R$. Leveraging the recent growth in deep learning, some time-varying graph signal recovery methods include unrolling technique \cite{kojima2023restoration}, graph neural network (GNN)~\cite{castro2023time}, and joint sampling and reconstruction of time-varying graph signals \cite{xiao2023joint}. In this work, we are dedicated to developing unsupervised time-varying graph signal recovery algorithms that do not involve or rely on data training.

\renewcommand{\arraystretch}{1.8}
\begin{table*}[ht]
\centering
\begin{tabular}{p{0.2\linewidth}p{0.8\linewidth}}
\hline
Method  & Optimization Model  \\
\hline
 GS \cite{narang2013localized} &
 $
\min_{X} \frac12\norm{Y-J\circ X}_F^2+\frac{\alpha}2 \tr(X^TLX)
$
($\theta=\beta =\gamma=0$ ) \\
\hline
 Tikhonov \cite{perraudin2017towards}   &
 ${\min_{X} \frac12\norm{Y-J\circ X}_F^2+ \frac{\alpha}2 \tr(X^TLX)+\frac{\gamma}2\norm{XD_1}_F^2}$ ($\theta=\beta = 0,L_t=D_1D_1^T$ )\\
\hline
TGSR \cite{qiu2017time}  &
$
\min_{X} \frac12\norm{Y-J\circ X}_F^2+\frac{\alpha}2\tr(D_1^TX^TLXD_1)
$ (
$\theta=1,\beta =\gamma = 0$) \\
\hline
LRDS \cite{mao2018spatio} &
$
\min_{X} \frac12\norm{Y-J\circ X}_F^2+\frac{\alpha}2\tr(D_1^TX^TLXD_1) +\beta \|X\|_*
$ \hspace{0.1in}($\theta=1,\gamma = 0$)\\
\hline

Sobolev \cite{giraldo2022reconstruction}  &
$
{\min_{X} \frac12\norm{Y-J\circ X}_F^2+\frac{\alpha}2\tr(D_1^TX^T(L+\epsilon I)^{r} XD_1)
}$ ($\theta=1,L_s=\tilde{L},\beta = \gamma= 0$) \\
\hline
LRGTS \cite{liu2023time} &
$
\min_X\frac12\norm{Y-J\circ X}_F^2+\frac{\alpha}2\tr(X^TLX)
+\beta\norm{X}_*+\frac{\gamma}2\norm{XD_2}_F^2$ ($\theta=0,L_t=D_2D_2^T$)
\\ \hline
Proposed L2 &
${\min_{X}\hspace{-2pt}\frac12\hspace{-2pt}\norm{Y-J\circ X}_F^2+\frac{\alpha}2\tr(D_{\theta}^TX^T(L+\epsilon I)^{r} XD_{\theta}) + \beta \norm{X}_{\text{erf}}}$ ($\gamma = 0$) \text{where} $ \norm{X}_{\text{erf}} $ \text{is an ERF weighted nuclear norm} \\
\hline

Proposed L1 &
$
\min_{X} \norm{Y-J\circ X}_1+\frac{\alpha}2\tr(D_{\theta}^TX^T(L+\epsilon I)^{r} XD_{\theta}) + \beta \norm{X}_{\text{erf}}
$ ($\gamma = 0$) \text{where} $ \norm{X}_{\text{erf}} $ \text{is an ERF weighted nuclear norm}   \\
\hline
\end{tabular}
\caption{Comparison of Related Works and Proposed Methods.}\label{tab1}
\end{table*}

Following the general framework \eqref{eqn:univModel}, we propose a novel low-rank regularization $R(X)$ based on the error function (ERF) \cite{guo2021novel} for sparse signal recovery (see Section~\ref{subsec:alg1}). In addition, to handle non-Gaussian type of noise such as Laplace noise, we propose a variant model in which the Frobeinus norm based data fidelity term is replaced with the $\ell_1$-norm data fidelity (see Section~\ref{subsec:alg2}). In Table \ref{tab1}, we provide a summary of the proposed models and relevant works pertaining to the general framework outlined in \eqref{eqn:univModel}.

\subsection{Contributions}\label{sec:contri}
The major contributions of this work are described as follows.
\begin{enumerate}
\item We develop a generalized time-varying graph signal recovery framework encompassing several state-of-the-art works as specific cases. We also develop two new models with a new regularization based on ERF.
\item The proposed models combine high-order temporal smoothness and graph structures with the temporal correlation exploited by iteratively reweighted nuclear norm regularization. 
\item We propose an efficient algorithm for solving the proposed models. Convergence analysis has shown that the algorithm generates a sequence that converges to a stationary point of the problem.
\item We conduct various numerical experiments, utilizing both synthetic and real-world datasets (specifically PM2.5 and sea surface temperature data), to validate the effectiveness of the proposed algorithm.
\end{enumerate}

\subsection{Organization}\label{sec:org}
The subsequent sections of this paper are structured as follows. In Section~\ref{sec:method}, we introduce a pioneering framework for recovering time-varying graph signals, leveraging Sobolev smoothness and ERF regularization. Additionally, we put forth an efficient algorithm based on the alternating direction method of multipliers (ADMM) and iterative reweighting scheme. A comprehensive convergence analysis of the proposed algorithm is also provided. In Section~\ref{sec:exp}, we present numerical experiments conducted on synthetic and real-world datasets sourced from environmental and epidemic contexts. Finally, Section~\ref{sec:con} encapsulates our conclusions and outlines potential avenues for future research.

\section{Proposed Method}
\label{sec:method}

\subsection{Error Function Weighted Nuclear Norm Regularization}

To enhance the low-rankness of a matrix, weighted nuclear norm minimization (WNNM) has been developed with promising performance in image denoising~\cite{gu2014weighted}. Specifically, the weighted nuclear norm (WNN) is defined as
\begin{equation}\label{eqn:matnorm}
\norm{L}_{\vw,*}:=\sum_{i}w_i\sigma_i(L),
\end{equation}
where $\sigma_i(L)$ is the $i$-th singular value of $L$ in the decreasing order and the weight vector $\vw=(w_i)$ is in the non-decreasing order with $w_i\geq0$ being the $i$-th weight. Choosing the weights is challenging in sparse and low-rank signal recovery problems. Iteratively reweighted L1 (IRL1)~\cite{candes2008enhancing} was proposed for the sparse recovery problem, where the weight is updated based on the previous estimate. It can solve many problems with complicated sparse regularizations, exhibiting improved sparsity and convergence speed.

In this work, we introduce a novel ERF-weighted nuclear norm based on the ERF regularizer~\cite{guo2021novel} and use linearization to obtain WNN. For any real matrix $X$ with $n$ singular values $\sigma_1(X)\geq \ldots\geq \sigma_n(X)$, the ERF-weighted nuclear norm is
\begin{equation}\label{eqn:erf} \norm{X}_{\erf}=\sum_{i=1}^n\int_0^{\sigma_i(X)}e^{-t^2/\sigma^2}dt, \end{equation} where $\sigma$ serves as a filtering parameter. To solve the ERF-nuclear norm regularized minimization problem, we use iterative reweighting (linearization) to get WNN with adaptive weights.

\subsection{Fractional-order derivative}
Inspired by the Gr\"{u}nwald-Letnikov fractional derivative~\cite{podlubny1999fractional}, we introduce the total $\theta$-th order temporal forward difference matrix with a zero boundary condition, as shown below
\begin{equation}\label{eq:frac-der}
    D_{\theta}=\begin{bmatrix}
C(0)&&&\\
\vdots&\ddots&&&\\
C(k)&\cdots&C(0)&&\\
&\ddots&&\ddots \\
&&C(k)&\cdots&C(0)
\end{bmatrix}\in\R^{m\times m}.
\end{equation}
Here the coefficients $\{C(i)\}_{i=0}^k$ are defined as
\[
C(i)=\frac{\Gamma(\theta+1)}{\Gamma(i+1)\Gamma(\theta+1-i)},\quad 0\leq i\leq k,
\]
where $\Gamma(x)$ is the Gamma function. Notice that if $\theta$ is a positive integer, $k$ can be deterministic. For example, if $\theta=1$, then $k=1$ and we have $C(0)=1$ and $C(1)=-1$, which is reduced to the first-order finite difference case. If $\theta=2$, then it reduces to the temporal Laplacian operator. Generally if $\theta=n$, then only the first $n+1$ coefficients $\{C(i)\}_{i=0}^n$ are nonzero and thereby $k=n+1$. For any fractional value $\theta$, we have to choose the parameter $k.$ The difference matrix \eqref{eq:frac-der} is built upon the zero boundary condition, while other types of boundary conditions, e.g., Newmann and periodic boundary conditions, can also be used. Alternatively, we can use low-order difference schemes for boundary conditions, e.g., the first-order forward difference based on the first $m-1$ time points and the zeroth order for the last time point.

\subsection{Proposed Algorithm 1}\label{subsec:alg1}
We propose the following ERF regularized time-varying graph signal recovery model
\begin{equation}\label{eqn:model1a}
\min_{X}~\frac12\norm{Y-J\circ X}_F^2+\frac{\alpha}2\tr(D_\theta^TX^T(L+\epsilon I)^{r}XD_\theta)+\beta \norm{X}_{\text{erf}}.
\end{equation}
Here we use the least squares as a data fidelity term, the Sobolev smoothness of time-varying graph signals \cite{giraldo2022reconstruction} as the graph regularization, and an ERF-based regularization defined in \eqref{eqn:erf} for temporal low-rank correlation.

We apply ADMM with linearization to solve the problem \eqref{eqn:model1a}. First, we introduce an auxiliary variable $Z$ to rewrite the problem~\eqref{eqn:model1a} into an equivalent constrained problem:
\[
\min_{X,Z}~\frac12\norm{Y-J\circ X}_F^2+\frac{\alpha}2\tr(D_\theta^TX^T(L+\epsilon I)^{r}XD_\theta)+\beta \norm{Z}_{\text{erf}},\text{ s.t. }X=Z.
\]

Since the proximal operator of $\|\cdot\|_{\text{erf}}$ is difficult to compute, we apply linearization on the ERF term to obtain a WNN when solving the subproblem for $Z$. The ADMM iterates as follows,
\begin{equation}
\begin{aligned}\label{eq:ADMM4alg1}
w_i\leftarrow & \exp(-\sigma_i^2(X)/\sigma^2),\qquad \text{for } i=1,\dots,m\\
Z\leftarrow &\argmin_{Z}~\beta\norm{Z}_{\vw,*}+\frac{\rho}2\norm{X-Z+\widehat{Z}}_F^2\\
X\leftarrow & \argmin_{X}\frac12\norm{J\circ X-Y}_F^2+
\frac{\alpha}2\tr(D_\theta^TX^T(L+\epsilon I)^{r}XD_\theta)+\frac{\rho}2\norm{X-Z+\widehat{Z}}_F^2\\
\hat Z \leftarrow &\hat Z + (X-Z),
\end{aligned}
\end{equation}
where $\rho>0$ is a stepsize that affects the convergence; please refer to Theorem \ref{thm: conv} for more details. We derive closed-form solutions for both $Z$- and $X$-subproblems in \eqref{eq:ADMM4alg1}.
Specifically for the $Z$-subproblem, it can be updated via the singular value thresholding operator, i.e.,
\begin{equation}\label{eqn:Z}
Z= SVT(X+\widehat{Z})=U\shrink(\Sigma,\diag(\beta\vw/\rho))V^T,
\end{equation}
where $U\Sigma V^T$ is the singular value decomposition of $X+\widehat{Z}$,  and $\diag(\cdot)$ is a diagonalization operator turning a vector into a diagonal matrix with the same entries as the vector. Here the shrink operator $\shrink(x,\xi)=\sign(x)*\max(|x|-\xi)$ is implemented entrywise, where $\sign(x)$ returns the sign of $x$ if $x\neq0$ and zero otherwise.

In the $X$-subproblem, we can rewrite the second term of the objective function as
\[\begin{aligned}
\tr(D_\theta^TX^T(L+\varepsilon I)^rXD_\theta)&=\norm{(L+\varepsilon I)^{r/2}XD_\theta}_F^2\\
&=\norm{(D_\theta^T\otimes (L+\varepsilon I)^{r/2})\vect(X)}_2^2
:=\norm{A\vect(X)}_2^2,
\end{aligned}\]
where $\otimes$ is the Kronecker product. Thus, the $X$-subproblem has the closed-form solution as
\begin{equation}\label{eqn:X}
X=\mat[(\widehat{J}+\alpha A^TA+\rho I)^{-1}(\widehat{J}^TY+\rho \vect(Z-\widehat{Z})))],
\end{equation}
where $\widehat{J}=\diag(\vect(J))$. Note that $\widehat{J}^TY=Y$ since $\widehat{J}$ is a diagonal matrix with binary entries in the diagonal, whose nonzero entries correspond to the sampled spatial points.
Furthermore, considering that the matrix $\widehat{J}+\alpha A^TA+\rho I$ is symmetric and positive definite, we perform its Cholesky factorization as $\widehat{J}+\alpha A^TA+\rho I=\tilde{L}\tilde{L}^T$. Subsequently, we leverage forward/backward substitution as a substitute for matrix inversion, thereby reducing computational time.
The pseudo-code of the proposed approach for minimizing the model \eqref{eqn:model1a} is given in Algorithm~\ref{alg1}.

\begin{algorithm}[ht]
\caption{Robust Time-Varying Graph Signal Recovery with High-Order Smoothness and Adaptive Low-Rankness}\label{alg1}
\begin{algorithmic}
\State\textbf{Input:} graph Laplacian $L$, parameters $\alpha,~\beta$, $\rho$, spatial Laplacian parameters $\epsilon$ and $r$, ERF parameter~$\sigma$, Fractional-order derivative parameters $\theta>0$ and integer $k\geq1$.
\State\textbf{Output:} $X$
\State\textbf{Initialize:} $X$, $\widehat{Z}$
\While{The stopping criteria is satisfied}
\State compute the weights $\vw$
\State update $Z$ via \eqref{eqn:Z}
\State update $X$ via \eqref{eqn:X}
\State $\widehat{Z}\leftarrow \widehat{Z}+(X-Z)$
\EndWhile
\end{algorithmic}
\end{algorithm}

\subsection{Proposed Algorithm 2}\label{subsec:alg2}
In real-world applications, the type of noise could be unknown, and it is possible to encounter a mixture of different types of noise. To enhance the robustness against noise, we propose the second model,
\begin{equation}\label{eqn:model2}
\min_X~\norm{Y-J\circ X}_1+\frac{\alpha}2\tr(D_\theta^TX^T(L+\epsilon I)^{r}XD_\theta)+\beta \norm{X}_{\text{erf}}
\end{equation}
Compared with \eqref{eqn:model1a}, this new model utilizes the $\ell_1$-norm data fidelity to accommodate various types of noise.
Because of the $\ell_1$ term, we need to introduce an additional variable $V$ to make the subproblems easy to solve. The equivalent constrained problem is

\[
\min_{\substack{J\circ X-Y=V\\X=Z}} \norm{
V}_1+\frac{\alpha}2\tr(D_\theta^TX^T(L+\epsilon I)^{r}XD_\theta)+\beta \norm{Z}_{\text{erf}}.
\]

Therefore, the ADMM with linearization on the ERF term has the following subproblems
\begin{equation}\begin{aligned}
V&\leftarrow \argmin_{V}\norm{V}_1+\frac{\rho_1}2\norm{J\circ X-Y-V+\widehat{V}}_F^2\\
Z&\leftarrow \argmin_{Z}\norm{Z}_{\vw,*}+\frac{\rho_2}2\norm{X-Z+\widehat{Z}}_F^2\\
X&\leftarrow \argmin_{X}
\frac{\alpha}2\tr(D_\theta^TX^T(L+\epsilon I)^{r}XD_\theta)+\frac{\rho_1}2\norm{J\circ X-Y-V+\widehat{V}}_F^2\\
&\qquad \qquad \phantom{\leftarrow}+\frac{\rho_2}2\norm{X-Z+\widehat{Z}}_F^2
\end{aligned}
\end{equation}
For the $V$-subproblem, we get the closed-form solution expressed via the shrinkage operator
\begin{equation}\label{eqn:V}
V=\shrink(\widehat{J}^T(Y+V-\widehat{V}),1/\rho_1).
\end{equation}
Similar to Algorithm 1, the solution of the $Z$-subproblem is given by \eqref{eqn:Z} with $\rho$ replaced by $\rho_2$. For the $X$-subproblem, we get the closed-form solution
\begin{equation}\label{eqn:X2}
X=\mat[(\rho_1\widehat{J}+\alpha A^TA+\rho_2 I)^{-1}(\rho_1\widehat{J}^T(Y+V-\widehat{V})+\rho_2 \vect(Z-\widehat{Z})))].
\end{equation}
The entire algorithm is described in Algorithm~\ref{alg2}.

\begin{algorithm}[ht]
\caption{Robust Time-Varying Graph Signal Recovery with High-Order Smoothness and Adaptive Low-Rankness}\label{alg2}
\begin{algorithmic}
\State\textbf{Input:} graph Laplacian $L$, parameters $\alpha,~\beta$, $\rho_1$, $\rho_2$, spatial Laplacian parameters $\epsilon$ and $r$, ERF parameter~$\sigma$, Fractional-order derivative parameters $\theta>0$ and integer $k\geq1$.
\State\textbf{Output:} $X$
\State\textbf{Initialize:} $X$, $\widehat{V}$, $\widehat{Z}$
\While{The stopping criteria is satisfied}
\State compute the weights $\vw$
\State update $V$ via \eqref{eqn:V}
\State update $Z$ via \eqref{eqn:Z}
\State update $X$ via \eqref{eqn:X2}
\State $\widehat{V}\leftarrow \widehat{V}+(J\circ X-Y-V)$
\State $\widehat{Z}\leftarrow \widehat{Z}+(X-Z)$
\EndWhile
\end{algorithmic}
\end{algorithm}

\subsection{Convergence Analysis of Algorithm~\ref{alg1}}

For simplicity, we define
$$f(X):=\frac12\norm{Y-J\circ X}_F^2+\frac{\alpha}2\tr(D_\theta^TX^T(L+\epsilon I)^{\gamma}XD_\theta)$$
and hence the augmented Lagrangian function is given by
$${\mathcal{L}}(X,Z,\hat Z)=f(X)+\beta \|Z\|_{\erf}+\rho\langle\hat Z,X-Z \rangle +{\frac{\rho}2}\|X-Z\|_F^2.$$
The function $f$ is convex and continuously differentiable. In addition, $\nabla f$ is Lipschitz continuous with a constant $L$. 

\begin{theorem} \label{thm: conv}
Let $\rho>L$ and $\{(X^k, Z^k,\hat Z^k)\}$ be a sequence generated from Algorithm~\ref{alg1}, then, the sequence is bounded and has a limit point that is a stationary point of the problem~\eqref{eqn:model1a}.
\end{theorem}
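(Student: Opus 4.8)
I would take the augmented Lagrangian $\mathcal{L}(X,Z,\hat Z)$ as the Lyapunov function and follow the standard template for ADMM on weakly nonconvex problems: (i) a sufficient-decrease inequality, (ii) boundedness of the iterates, (iii) vanishing of the successive differences, and (iv) a subsequential limit-point analysis identifying a stationary point. The identity that makes everything run comes from the $X$-subproblem in \eqref{eq:ADMM4alg1}: since $X^{k}$ minimizes the strongly convex smooth function $X\mapsto f(X)+\tfrac{\rho}{2}\|X-Z^{k}+\hat Z^{k-1}\|_F^2$, we get $\nabla f(X^{k})+\rho(X^{k}-Z^{k}+\hat Z^{k-1})=0$, and combining with the dual update $\hat Z^{k}=\hat Z^{k-1}+(X^{k}-Z^{k})$ yields $\rho\hat Z^{k}=-\nabla f(X^{k})$. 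Hence $X^{k}-Z^{k}=\hat Z^{k}-\hat Z^{k-1}$, and by $L$-Lipschitz continuity of $\nabla f$, $\rho\|\hat Z^{k}-\hat Z^{k-1}\|_F=\|\nabla f(X^{k})-\nabla f(X^{k-1})\|_F\le L\|X^{k}-X^{k-1}\|_F$, so both the feasibility gap and the dual increment are controlled by the primal increment.

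For the sufficient-decrease step I would bound the three partial changes of $\mathcal{L}$ separately. The $X$-step decreases $\mathcal{L}$ by at least $\tfrac{\rho}{2}\|X^{k}-X^{k-1}\|_F^2$ by $\rho$-strong convexity of the $X$-subproblem; the dual step raises $\mathcal{L}$ by exactly $\rho\|\hat Z^{k}-\hat Z^{k-1}\|_F^2\le\tfrac{L^2}{\rho}\|X^{k}-X^{k-1}\|_F^2$. The $Z$-step is where the linearization matters. Writing $\phi(s):=\int_0^{s}e^{-t^2/\sigma^2}\,dt$, the profile $\phi$ is increasing and concave on $[0,\infty)$, so $\|Z\|_{\erf}=\sum_i\phi(\sigma_i(Z))$ is majorized tightly at $\sigma(X^{k-1})$ by the weighted nuclear norm with weights $w_i=\phi'(\sigma_i(X^{k-1}))=e^{-\sigma_i^2(X^{k-1})/\sigma^2}$, exactly those used in \eqref{eq:ADMM4alg1}; and since $\sigma_1(X^{k-1})\ge\sigma_2(X^{k-1})\ge\cdots$, the weights are non-decreasing, so $\|\cdot\|_{\vw,*}$ is convex and $Z^{k}$ is the exact minimizer of a $\rho$-strongly convex subproblem, giving a drop of $\tfrac{\rho}{2}\|Z^{k}-Z^{k-1}\|_F^2$ in the \emph{majorized} Lagrangian. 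Replacing the surrogate by the true ERF term then costs only the majorization slack $\beta\sum_i\big(\phi(\sigma_i(X^{k-1}))+\phi'(\sigma_i(X^{k-1}))(\sigma_i(Z^{k-1})-\sigma_i(X^{k-1}))-\phi(\sigma_i(Z^{k-1}))\big)$, which by boundedness of $\phi''$ and Weyl's inequality for singular values is $O(\|X^{k-1}-Z^{k-1}\|_F^2)=O(\|X^{k-1}-X^{k-2}\|_F^2)$ (using $X^{k-1}-Z^{k-1}=\hat Z^{k-1}-\hat Z^{k-2}$ and the Lipschitz bound above). Summing over $k$, this delayed squared increment telescopes against the $X$-descent term, so for $\rho$ sufficiently large relative to $L$ (and the ERF parameter $\sigma$) one obtains $\mathcal{L}(X^{k},Z^{k},\hat Z^{k})\le\mathcal{L}(X^{k-1},Z^{k-1},\hat Z^{k-1})-c\big(\|X^{k}-X^{k-1}\|_F^2+\|Z^{k}-Z^{k-1}\|_F^2\big)$ along the iterations.

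Boundedness comes next. Using $\rho\langle\hat Z^{k},X^{k}-Z^{k}\rangle=-\langle\nabla f(X^{k}),X^{k}-Z^{k}\rangle$ and the descent lemma for $f$, $\mathcal{L}(X^{k},Z^{k},\hat Z^{k})\ge f(Z^{k})+\beta\|Z^{k}\|_{\erf}+\tfrac{\rho-L}{2}\|X^{k}-Z^{k}\|_F^2\ge 0$, so $\mathcal{L}$ is bounded below; combined with the decrease inequality, $\{\mathcal{L}^{k}\}$ converges, $\{f(Z^{k})+\beta\|Z^{k}\|_{\erf}\}$ and hence $\{f(X^{k})\}$ are bounded, and — because $D_{\theta}$ is lower triangular with unit diagonal $C(0)=1$, hence invertible, while $(L+\epsilon I)^{r}\succ 0$ — the smoothness part of $f$ is coercive in $X$, so $\{X^{k}\}$ is bounded, whence $\{Z^{k}\}$ is bounded and $\{\hat Z^{k}\}=\{-\nabla f(X^{k})/\rho\}$ is bounded by continuity of $\nabla f$ on bounded sets. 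The decrease inequality also gives $\sum_k\big(\|X^{k}-X^{k-1}\|_F^2+\|Z^{k}-Z^{k-1}\|_F^2\big)<\infty$, hence $\|X^{k}-X^{k-1}\|_F\to0$, $\|Z^{k}-Z^{k-1}\|_F\to0$, $\|\hat Z^{k}-\hat Z^{k-1}\|_F\to0$, and the feasibility gap $\|X^{k}-Z^{k}\|_F=\|\hat Z^{k}-\hat Z^{k-1}\|_F\to0$. Finally, extract a convergent subsequence $(X^{k_j},Z^{k_j},\hat Z^{k_j})\to(X^{*},Z^{*},\hat Z^{*})$; then $X^{*}=Z^{*}$. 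Letting $j\to\infty$ in $\rho\hat Z^{k}=-\nabla f(X^{k})$ gives $\nabla f(X^{*})+\rho\hat Z^{*}=0$, and letting $j\to\infty$ in the $Z$-subproblem optimality $0\in\beta\,\partial\|Z^{k_j}\|_{\vw^{k_j},*}-\rho\hat Z^{k_j-1}-\rho(X^{k_j-1}-Z^{k_j})$ (using that all increments vanish and $\vw^{k_j}\to e^{-\sigma_i^2(X^{*})/\sigma^2}=:\vw^{*}$) produces an element of $\beta\,\partial\|\cdot\|_{\vw^{*},*}(Z^{*})$ equal to $\rho\hat Z^{*}=-\nabla f(X^{*})$; since $\vw^{*}=\phi'(\sigma(Z^{*}))$, the weighted nuclear norm is tangent from above to $\|\cdot\|_{\erf}$ at $Z^{*}$, so by spectral (von Neumann–Lewis) subdifferential calculus this element also lies in $\beta\,\partial\|\cdot\|_{\erf}(X^{*})$, whence $0\in\nabla f(X^{*})+\beta\,\partial\|\cdot\|_{\erf}(X^{*})$, i.e., $X^{*}$ is a stationary point of \eqref{eqn:model1a}. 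I expect the main obstacle to be the sufficient-decrease step — bookkeeping the majorization slack created by recomputing the weights at every iteration (the ``delayed'' $O(\|X^{k-1}-X^{k-2}\|_F^2)$ term) and pinning down how large $\rho$ must actually be relative to $L$ — with a secondary technical point being the rigorous spectral-subdifferential identification for the nonconvex ERF function in the last step.
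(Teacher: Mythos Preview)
Your plan follows the same template as the paper: take the augmented Lagrangian as a Lyapunov function, establish sufficient decrease step-by-step, obtain a lower bound via the identity $\rho\hat Z^{k}=-\nabla f(X^{k})$ together with the descent lemma, deduce boundedness and summability of the increments, and pass to a subsequential limit. The key identity from the $X$-update, the lower-bound computation $\mathcal{L}\ge f(Z^k)+\beta\|Z^k\|_{\erf}+\tfrac{\rho-L}{2}\|X^k-Z^k\|_F^2$, and the limit-point argument are handled essentially identically in both.

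The one substantive difference is the $Z$-step. The paper asserts in one line that
\[
\|Z^{k+1}\|_{\erf}-\|Z^{k}\|_{\erf}\ \le\ \|Z^{k+1}\|_{\vw^k,*}-\|Z^{k}\|_{\vw^k,*}
\]
``because the error function is concave,'' and then invokes $\rho$-strong convexity of the $Z$-subproblem to get the $-\tfrac{\rho}{2}\|Z^{k+1}-Z^{k}\|_F^2$ drop. That displayed inequality is immediate when the weights are the tangent slopes at $\sigma(Z^{k})$, but Algorithm~\ref{alg1} linearizes at $\sigma(X^{k})$, so there is a residual slack that the paper does not account for. Your explicit bookkeeping of this majorization slack---bounding it by $O(\|X^{k-1}-Z^{k-1}\|_F^2)=O(\|X^{k-1}-X^{k-2}\|_F^2)$ via boundedness of $\phi''$ and the Hoffman--Wielandt inequality, then absorbing it against the $X$-descent term---is the more rigorous route and is exactly what closes the gap. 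The price you pay is a stronger hypothesis on $\rho$ (depending also on the ERF parameter $\sigma$) than the paper's stated $\rho>L$, which you correctly flag. You also spell out the coercivity behind boundedness (invertibility of $D_\theta$ from its unit diagonal, and $(L+\epsilon I)^r\succ 0$), which the paper uses implicitly, and your spectral-subdifferential identification at the end is a more detailed version of the paper's one-line conclusion $\beta\partial\|Z^\star\|_{\erf}-\rho\hat Z^\star\ni 0$.
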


\begin{proof}
Consider one iteration of Algorithm~\ref{alg1}, the update of $Z^{k+1}$ gives
\begin{align*}
    \,&\cL(X^{k},Z^{k+1},\hat Z^{k})-\cL(X^k,Z^{k},\hat Z^k)\\
    =\,& \beta\|Z^{k+1}\|_{\text{erf}}+{\frac{\rho}2}\|X^k-Z^{k+1}+\hat Z^k\|_F^2-\beta\|Z^k\|_{\text{erf}} -{\frac{\rho}2}\|X^k-Z^k+\hat Z^k\|_F^2\\
    \leq \,& \beta\|Z^{k+1}\|_{w^k,*}-\beta\|Z^k\|_{w^k,*}+{\frac{\rho}2}\|X^k+\hat Z^k-Z^{k+1}\|_F^2-{\frac{\rho}2}\|X^k+\hat Z^k-Z^k\|_F^2\\
    \leq \,& -{\frac{\rho}2}\|Z^{k+1}-Z^k\|_F^2. \numberthis\label{AL1}
\end{align*}
The first inequality holds because the error function is concave for positive values. The second inequality is valid because $Z^{k+1}$ is the optimal solution of the $Z$-subproblem.

Then we consider the updates of $X^{k+1}$ and $\hat Z^{k+1}$, which together give
\begin{align*}
    &\cL(X^{k+1},Z^{k+1},\hat Z^{k+1})-\cL(X^k,Z^{k+1},\hat Z^k)\\
    =&f(X^{k+1})+\rho\langle\hat Z^{k+1},X^{k+1}-Z^{k+1}\rangle+{\frac{\rho}2}\|X^{k+1}-Z^{k+1}\|_F^2\\
    &-f(X^k)-\rho\langle \hat Z^k,X^k-Z^{k+1}\rangle-{\frac{\rho}2}\|X^k-Z^{k+1}\|_F^2\\
    =&f(X^{k+1})-f(X^k)+\rho\langle \hat Z^{k+1},X^{k+1}-X^k\rangle\\
    &+{\rho}\|\hat Z^{k+1}-\hat Z^{k}\|_F^2-{\frac{\rho}2}\|X^{k+1}-X^k\|_F^2,
\end{align*}
where the last equality uses the update $\hat Z^{k+1}=\hat Z^k+X^{k+1}-Z^{k+1}$. Since $f$ is smooth, the updates of $X^{k+1}$ and $\hat Z^{k+1}$ show that $\rho\hat Z^{k+1}+\nabla f(X^{k+1})=0$. The convexity and smoothness of $f$ give $f(X^{k+1})+\langle\nabla f(X^{k+1}),X^k-X^{k+1}\rangle+{\frac{1}{2L}}\|\nabla f(X^{k+1})-\nabla f(X^k)\|^2\leq f(X^k)$. Therefore, we have
\begin{align*}
    &\cL(X^{k+1},Z^{k+1},\hat Z^{k+1})-\cL(X^k,Z^{k+1},\hat Z^k)\\
    \leq & \left(\max\left({\frac1{\rho}}-{\frac1{2L}},0\right)L^2-{\frac{\rho}2}\right)\|X^{k+1}-X^k\|_F^2. \numberthis\label{AL2}
\end{align*}
If $\rho>L$, then $\max\left({\frac{1}{\rho}}-{\frac1{2L}},0\right)L^2-{\frac{\rho}2}<0$.

Combing the equations~\eqref{AL1} and~\eqref{AL2}, we see that $\cL(X^k,Z^k,\hat Z^k)$ is decreasing. Furthermore if $\rho>L$, we have
\begin{align*}
        &f(X^k)+\beta\|Z^k\|_{\text{erf}}+\rho\langle\hat Z^k,X^k-Z^k\rangle+{\frac{\rho}2}\|X^k-Z^k\|_F^2\\
    =   &f(X^k)+\beta\|Z^k\|_{\text{erf}}-\langle\nabla f(X^k),X^k-Z^k\rangle+{\frac{\rho}2}\|X^k-Z^k\|_F^2\\
    \geq& f(Z^k)+\beta\|Z^k\|_{\text{erf}}+{\frac{\rho-L}2}\|X^k-Z^k\|_F^2\geq 0, \numberthis\label{LB}
\end{align*}
where the last inequality comes from the Lipschitz continuity of $\nabla f$. So $\cL(X^k,Z^k,\hat Z^k)$ is bounded from below. Therefore, $\cL(X^k,Z^k,\hat Z^k)$ converges and
\begin{align}\lim_{k\rightarrow\infty} (X^{k+1}-X^k)=0,\quad \lim_{k\rightarrow \infty} (Z^{k+1}-Z^k)=0.\label{conxy}\end{align}
Since $\nabla f$ is Lipschitz continuous, we can get
\begin{align}\lim_{k\rightarrow\infty} \hat Z^{k+1}-\hat Z^k= X^{k}- Z^k=0.\label{conz}\end{align}

Next, we show that $(X^k,Z^k,\hat Z^k)$ is bounded. Since we have shown in~\eqref{LB} that
\begin{align*}{\mathcal{L}}(X^k,Z^k,\hat Z^k)
\geq & f(Z^k)+\beta \|Z^k\|_{\text{erf}} +{\frac{\rho-L}2}\|X^k-Z^k\|_F^2.
\end{align*}
Therefore, when $\rho>L$, the boundedness of ${\mathcal{L}}(X^k,Z^k,\hat Z^k)$ gives the boundedness of $f(Z^k)+\beta \|Z^k\|_{\erf}$ and $\|X^k-Z^k\|_F^2$. Thus, sequences $\{X^k\}$ and $\{Z^k\}$ are also bounded. Because $\rho \hat Z^k=-\nabla f(X^k)$, the sequence $\{\hat Z^k\}$ is also bounded.

Since the sequence $\{(X^k,Z^k,\hat Z^k)\}$ is bounded. There exists a convergent subsequence, that is, $(X^{k_i},Z^{k_i},\hat Z^{k_i})\rightarrow (X^\star,Z^\star,\hat Z^\star)$. The limits~\eqref{conxy} and~\eqref{conz} show that $(X^{k_i+1},Z^{k_i+1},\hat Z^{k_i+1})\rightarrow (X^\star,Z^\star,\hat Z^\star)$. Then we have that $X^\star=Z^\star$ and $\beta\partial \|Z^\star\|_{\erf}-\rho \hat Z^\star=0$. Thus, $X^\star$ is a stationary point of the original problem~\eqref{eqn:model1a}.
\end{proof}

\section{Numerical Experiments}\label{sec:exp}
In this section, we conduct various numerical experiments on synthetic and real data to demonstrate the performance of our proposed methods. In particular, we compare our methods - Algorithm 1 and Algorithm 2 - with other related states of the art, including natural neighbor interpolation (NNI) \cite{sibson1981brief}, graph smooth (GS) \cite{narang2013localized}, Tikhonov \cite{perraudin2017towards}, TGSR \cite{qiu2017time}, LRDS \cite{mao2018spatio}, and Sobolev \cite{giraldo2022reconstruction}. To evaluate the reconstruction quality, we adopt the root mean square error (RMSE) as a comparison metric, defined as follows
\begin{equation}\label{eq:RMSE}
   \RMSE=\frac{\|X-\widehat{X}\|_F}{\sqrt{nm}},
\end{equation}
where $\widehat{X}$ is the approximation of the ground truth graph signal $X\in\R^{n\times m}$ defined on a spatial-temporal graph with $n$ nodes and $m$ time instances.
All the numerical experiments are implemented on Matlab R2021a in a desktop computer with Intel CPU i9-9960X RAM 64GB and GPU Dual Nvidia Quadro RTX5000 with Windows 10 Pro.

\subsection{Synthetic Data}
Following the work of \cite{qiu2017time}, we generate $N=100$ nodes randomly from the uniform distribution in a $100\times 100$ square area. The graph weight is determined using $k$-nearest neighbors. Specifically, the weight between any two nodes is inversely proportional to the square of their Euclidean distance. We consider $k=5$ and visualize the corresponding graph in Fig.~\ref{fig:graph-con}.

\begin{figure}
    \centering
    \includegraphics[width=.65\textwidth]{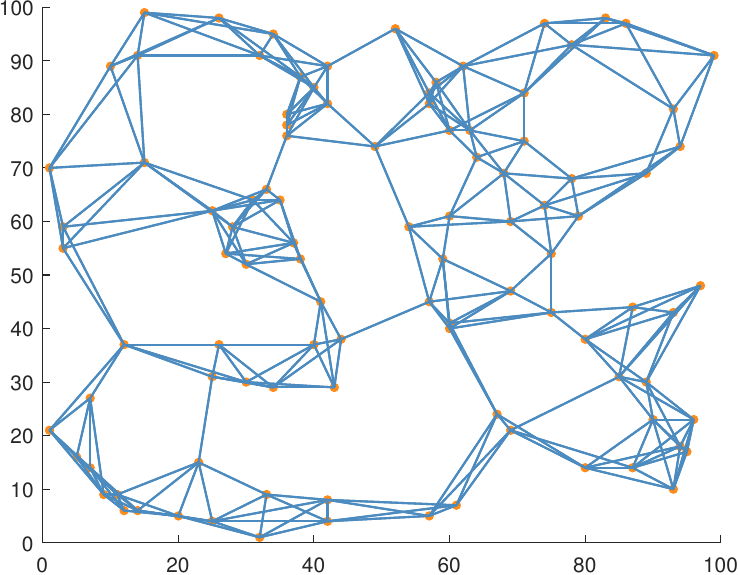}
    \caption{The graph is constructed by kNN with $k=5$. The weight between any two nodes is inversely proportional to the square of their Euclidean distance.}
    \label{fig:graph-con}
\end{figure}

Denote the weight matrix by $W,$ its degree matrix $D,$ and the graph Laplacian $L$ has eigen-decomposition $L=U\Lambda U^T,$ where $\Lambda =\text{diag}(0, \lambda_2, \cdots, \lambda_N)$.
We further define $L^{-1/2}=U\Lambda^{-1/2}U^T$ where $\Lambda^{-1/2} =\text{diag}(0, \lambda_2^{-1/2}, \cdots, \lambda_N^{-1/2})$. Starting from $x_1,$ we generate the time-varying graph signal
\begin{equation}\label{eq:time-varying-generate}
x_{t}=x_{t-1} + L^{-1/2} f_t, \quad \text{for } t = 2, \cdots, T,
\end{equation}
where $f_t$ is an i.i.d. Gaussian signal rescaled to $\|f_t\|_2=\kappa$ and $\kappa$ corresponds to a temporal smoothness of the signal. Stacking $\{x_t\}$ as a column vector, we obtain a data matrix $X=[x_1, x_2, \cdots, x_T]$. We generate a \textit{low-rank} data matrix obtained by starting with an empty matrix $X$ and repeating $X=[X, x_1, \cdots, x_{10}, x_{10}, x_9,\cdots, x_1]$ 10 times, thus also getting a $100\times 200$ data matrix. The measurement noise at each node is i.i.d. Gaussian noise $\mathcal N(0,\eta^2),$ where $\eta$ is the standard deviation.

\textit{Parameter tuning.} For the proposed Algorithm \ref{alg1}, we fix the following parameters: $k=3$ and $\theta=1.8$ in the definition of fractional-order derivative \eqref{eq:frac-der}; $\sigma=10^3$ in the definition of the ERF regularization \eqref{eqn:erf}; $\epsilon=0.1$ and $r=3$ in the Sobolev graph Laplacian; and the step size $\rho=10^{-6}$ in the ADMM iterations \eqref{eq:ADMM4alg1}. In each set of experiments, we carefully tune two parameters $(\alpha,\beta)$ that determine the weights for the spatial-temporal smoothness and the low-rankness, respectively,  in the proposed model \eqref{eqn:model1a}. We choose the best combination of $(\alpha,\beta)$ among $\alpha\in \{0, 10^{-5},10^{-4}, 10^{-3}, 10^{-2},10^{-1}, 1, 10\}$ and $\beta\in \{0, 10^{-8}, 10^{-7}, 10^{-6}, 10^{-5},10^{-4}, 10^{-3}, 10^{-2}, 10^{-1}, 1, 10\}$.
As demonstrated in Table~\ref{tab1}, some competing methods are special cases of the proposed models, and hence, we only tune the parameters $\alpha,\beta$ for these methods while keeping other parameters fixed.

\textit{Reconstruction errors with respect to sampling rates.} We begin by evaluating the performance of competing methods under different sampling rates. The smoothness level is set as $\kappa=1,$ while the standard deviation of the Gaussian noise is $\eta=0.1.$ The reconstruction performance is evaluated via RMSE, defined in \eqref{eq:RMSE}, showing that the recovery errors of all the methods decrease with the increase of the sampling rates. The comparison results are visualized in Fig.~\ref{fig:syn-exp-4b}. The proposed method achieves significant improvements over the competing methods. Surprisingly, LRDS, equipped with the nuclear norm, does not yield stable reconstruction performance in the low-rank case.

\begin{figure}
    \centering
    \includegraphics[width=0.65\textwidth]{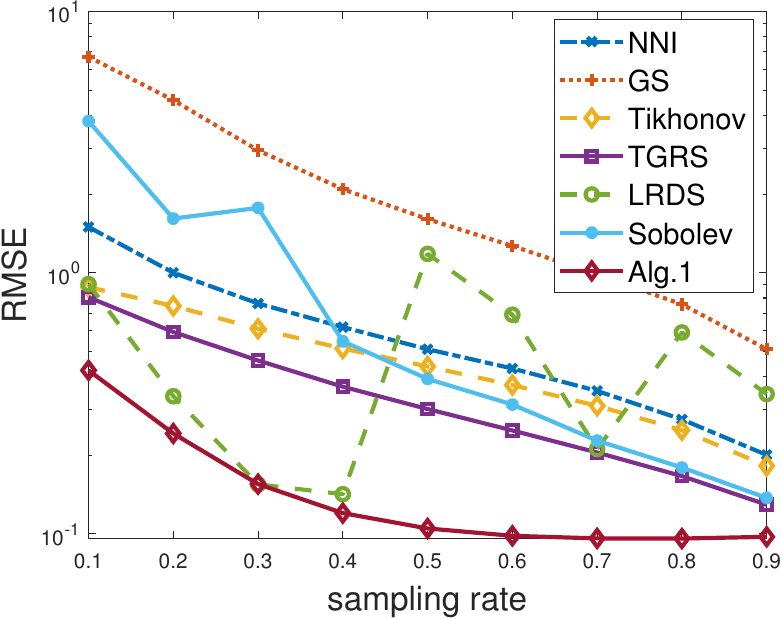}
    \caption{RMSE vs sampling rates. Averaged over 50 trials. }
    \label{fig:syn-exp-4b}
\end{figure}

\textit{Reconstruction errors with respect to noise levels.} We then investigate the recovery performance under different noise levels by setting the noise variance $\eta^2 = \{0.01, 0.1, 0.2, 0.4, 0.6, 0.8, 1\}.$ In this set of experiments, we fix the sampling rate as 40\% and smoothing level $\kappa=1.$ The noise level affects the magnitude of the least-squares fit, and as a result, we adjust the search window of $\alpha\in \{0, 10^{-3},10^{-2}, 10^{-1}, 1, 10, 10^{2},10^{3}, 10^4$. The parameter $\beta$ remains the same: $\beta\in \{0, 10^{-8}, 10^{-7}, 10^{-6}, 10^{-5},10^{-4}, 10^{-3}, 10^{-2}, 10^{-1}, 1, 10\}$. The results are presented in Fig.~\ref{fig:syn-exp-5a}, demonstrating the superior performance of the proposed Algorithm~\ref{alg1} under various noise levels.

\begin{figure}
    \centering
    \includegraphics[width=0.65\textwidth]{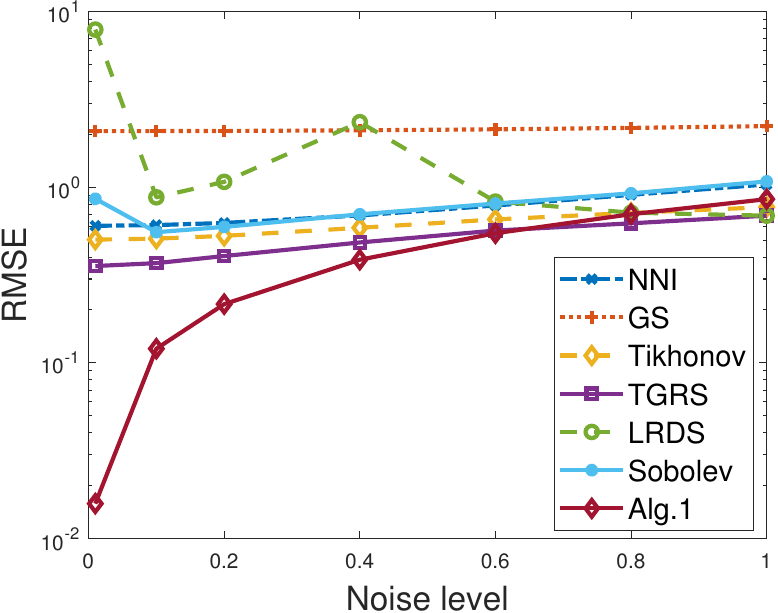}
    \caption{RMSE vs noise level: $\eta^2 = \{0.01, 0.1, 0.2, 0.4, 0.6, 0.8, 1\}.$ Averaged over 50 trials. }
    \label{fig:syn-exp-5a}
\end{figure}

\subsection{Real Data}

In the real data experiments, we first test the daily mean Particulate Matter (PM) 2.5 concentration dataset from California provided
by the US Environmental Protection Agency \url{ https://www.epa.gov/outdoor-air-quality-data}. We used the data captured daily from 93 sensors in California for the first 200 days in 2015. The constructed graph is depicted in Fig.~\ref{fig:PM25}. In Fig.~\ref{fig:pm2.5}, we compare the average recovery accuracy of all the comparing methods over 50 trials when the sampling rates are $0.1, 0.15, 0.2, 0.25, 0.3, 0.35, 0.4, 0.45$. In Table~\ref{tab:pm25_alg1v2}, we also compare the performance of Algorithm 1 and Algorithm 2, which shows Algorithm 2 can improve the accuracy of Algorithm 1 under some sampling rates with longer time in general.

\begin{figure}
    \centering    \includegraphics[width=.9\textwidth]
    {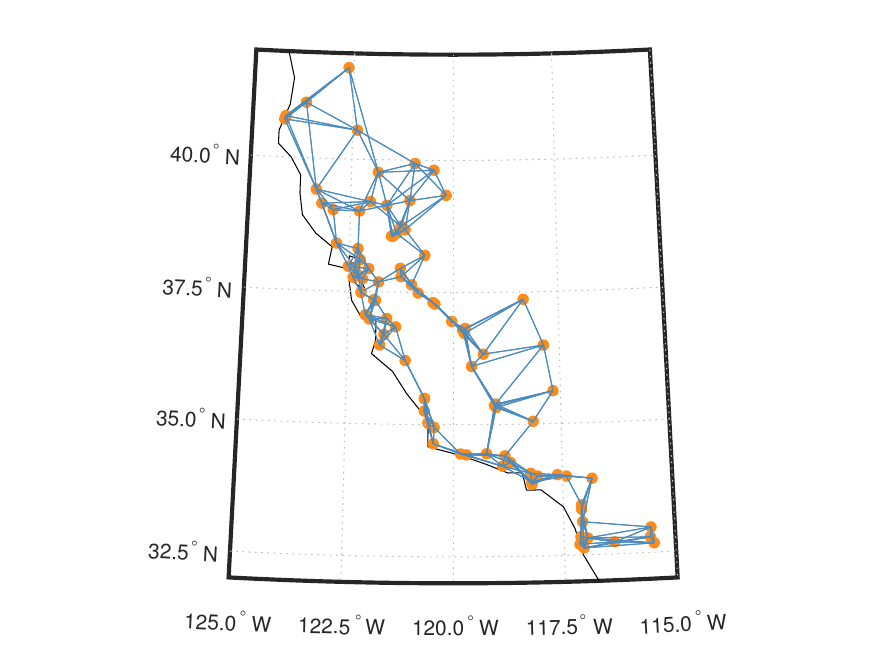}
    \caption{Graph with the places in California for the PM 2.5 concentration data. The graph was constructed with kNN for $k=5$.}\label{fig:PM25}
\end{figure}

\begin{figure}
\centering
\includegraphics[width=.65\textwidth]{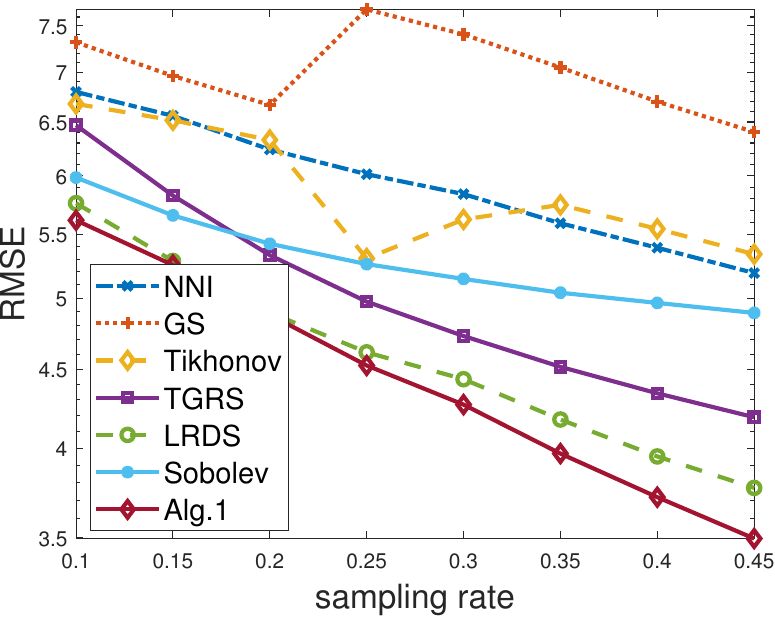}
\caption{Average recovery accuracy comparison on the PM2.5  data.}\label{fig:pm2.5}
\end{figure}

\begin{table}
\centering
\begin{tabular}{c|cc|cc}
\hline\hline
\multirow{2}{*}{Sampling rate} &\multicolumn{2}{c|}{Alg.1} & \multicolumn{2}{c}{Alg.2}\\ \cline{2-5}
&RMSE&Time (s)&RMSE&Time (s)\\ \hline
0.10&5.7321&22.95&5.6915&46.49\\
0.15&5.4770&22.63&6.4992&45.39\\
0.20&5.0427&23.83&5.9730&48.50\\
0.25&6.0358&23.37&5.6976&47.44\\
0.30&5.6065&23.70&5.3809&47.86\\
0.35&5.1920&23.55&5.1535&47.72\\
0.40&5.2398&23.56&4.7758&47.59\\
0.45&5.2283&23.80&5.0913&48.17\\ \hline\hline
\end{tabular}
\caption{Performance comparison of Algorithm 1 and Algorithm 2 for the PM2.5 data. The running time for Algorithm 1 is about 22$\sim$23 seconds while Algorithm 2 uses about $46\sim 48$ seconds.}\label{tab:pm25_alg1v2}
\end{table}

Next, we test the sea surface temperature dataset, which was captured monthly by the NOAA Physical
Sciences Laboratory (PSL). The data set can be downloaded from the PSL website \url{https://psl.noaa.gov/}. We use a subset of 200 time points on the Pacific Ocean
within 400 months. The constructed graph is illustrated in Fig.~\ref{fig:sea}. We see from Fig.~\ref{fig:seasurface} that the proposed algorithm outperforms other methods significantly and consistently across all sampling rates. In Table~\ref{tab:seasurf_alg1v2}, we also compare the performance of Algorithm~\ref{alg1} and Algorithm~\ref{alg2}, which indicates Algorithm~\ref{alg2} can improve the accuracy of Algorithm~\ref{alg1} under certain sampling rates but with more computational time in general.

\begin{figure}
    \centering    \includegraphics[width=.9\textwidth]
    {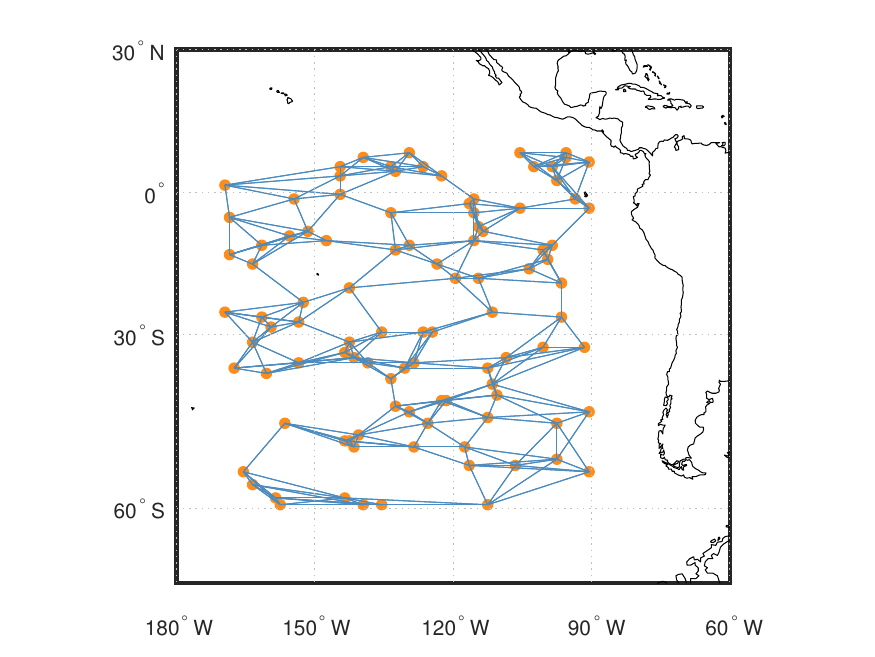}
    \caption{Graph with the places in the sea for the sea surface temperature data. The graph was constructed with kNN for
$k = 10$.}\label{fig:sea}
\end{figure}

\begin{figure}
\centering
\includegraphics[width=.65\textwidth]{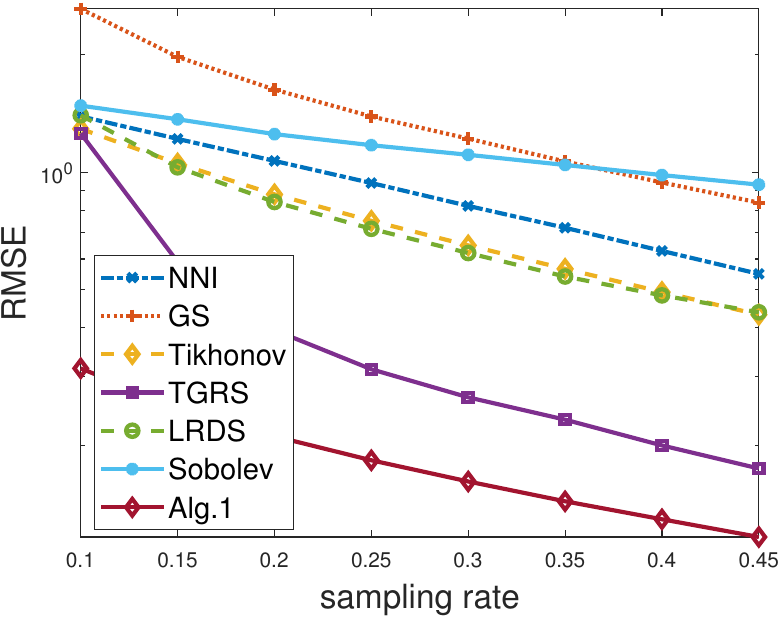}
\caption{Average recovery accuracy comparison on the sea surface temperature data.}\label{fig:seasurface}
\end{figure}

\begin{table}
\centering
\begin{tabular}{c|cc|cc}
\hline\hline
\multirow{2}{*}{Sampling rate} &\multicolumn{2}{c|}{Algorithm 1} & \multicolumn{2}{c}{Algorithm 2}\\ \cline{2-5}
&RMSE&Time (s)&RMSE&Time (s)\\ \hline
0.10&0.3148&3.97&0.3163&22.99\\
0.15&0.2497&3.37&0.2483&17.13\\
0.20&0.2110&3.08&0.2109&13.52\\
0.25&0.1832&2.87&0.1857&11.27\\
0.30&0.1617&2.74&0.1666&9.62\\
0.35&0.1438&2.63&0.1450&5.77\\
0.40&0.1294&2.54&0.1291&5.66\\
0.45&0.1166&2.46&0.1153&5.57\\ \hline\hline
\end{tabular}
\caption{Performance comparison of Algorithm 1 and Algorithm 2 for the sea surface temperature data. The running time for Algorithm 1 is about 2$\sim$4 seconds while Algorithm 2 uses about $6\sim 23$ seconds.}\label{tab:seasurf_alg1v2}
\end{table}

\subsection{Discussions}
Using the sea surface temperature data, we conduct an ablation study of the proposed model \eqref{eqn:model1a} without the smoothing regularization by setting $\alpha = 0$ or without the low-rank ERF term by setting $\beta =0$. We plot the RMSE curves with respect to the sampling rates and the noise levels in Fig.~\ref{fig:ablation}, showing that the ERF regularization has a larger influence on the performance compared to the Sobolev-base graph Laplacian regularization.

\begin{figure}
\centering
\includegraphics[width=.48\textwidth]{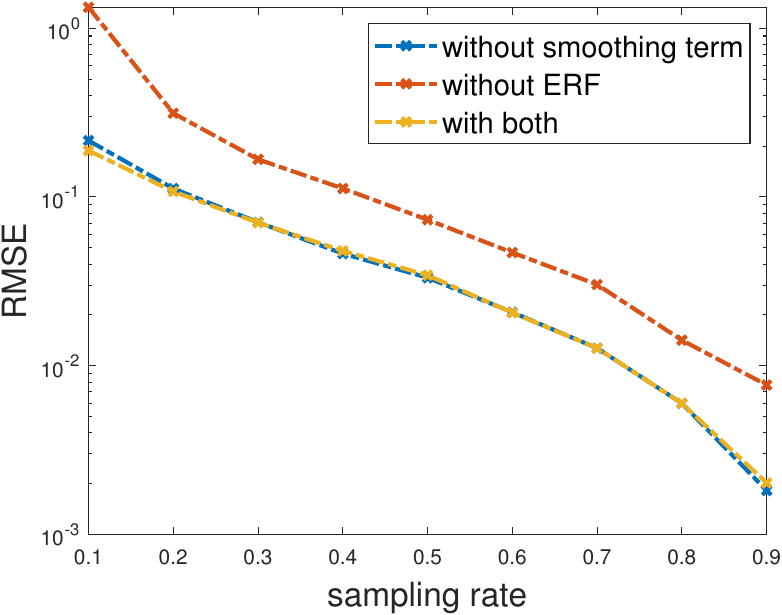}
\includegraphics[width=.48\textwidth]{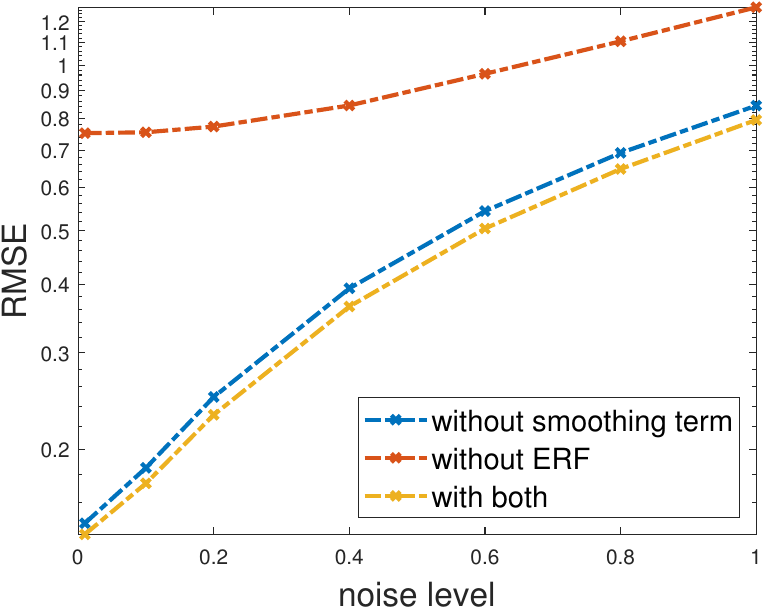}
\caption{Ablation study sampling rates (left) and noise levels (right) on the sea surface temperature data. }\label{fig:ablation}
\end{figure}

Using the same sea surface temperature data, we investigate whether the proposed model \eqref{eqn:model1a} is sensitive to the parameters $(r,\epsilon)$ in defining the Sobolev-graph Laplacian and $\sigma^2$ in defining the ERF regularization. Fig.~\ref{fig:sensitive}  shows that the proposed approach is not sensitive to various degrees of smoothness controlled by $r$ and $\epsilon$. Although the ERF regularization plays an important role in the recovery performance, as illustrated in the ablation study, the proposed model is not sensitive to the choice $\sigma^2$ as long as it is larger than 10,000.

In addition, we compare the proposed Algorithm~\ref{alg1} and Algorithm~\ref{alg2} using the sea surface temperature data and show the results in Tables \ref{tab:pm25_alg1v2} and \ref{tab:seasurf_alg1v2}. One can see that the two algorithms lead to similar RMSE, but Algorithm~\ref{alg2} is slower overall. We therefore prefer to use Algorithm~\ref{alg1} unless the data is heavily polluted by the non-Gaussian type of noise, such as Laplace noise.
\begin{figure}
\centering
\includegraphics[width=.48\textwidth]{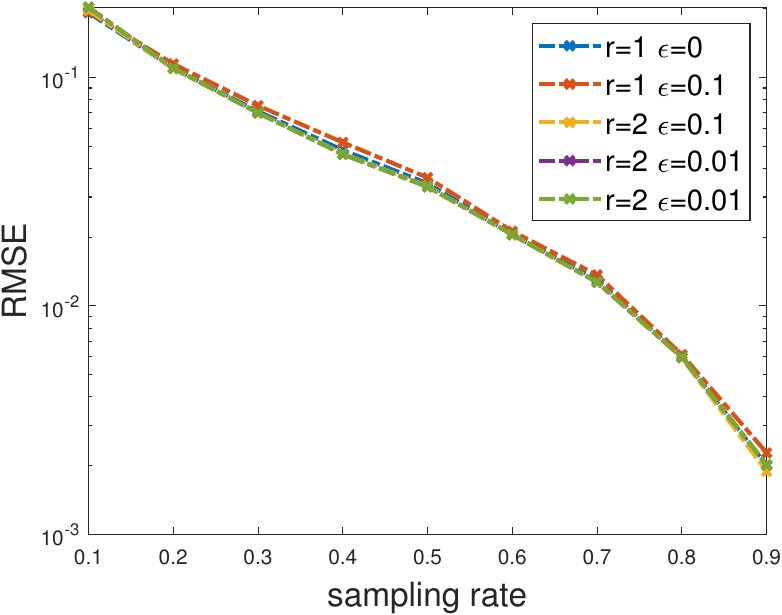}
\includegraphics[width=.48\textwidth]{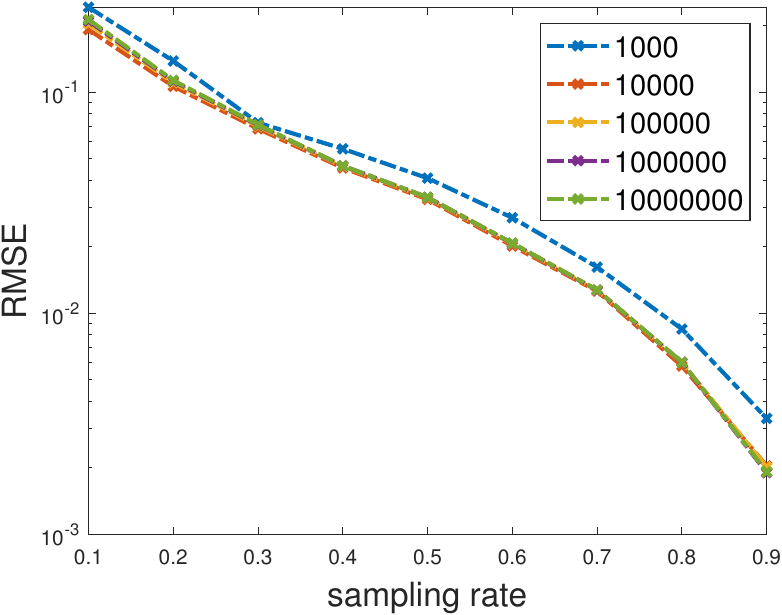}
\caption{Sensitivity analysis with respect to varying the graph Laplacian (left) and $\sigma^2$ in ERF (right) on the Sea Surface Temperature data. }\label{fig:sensitive}
\end{figure}

\section{Conclusions and Future Work}\label{sec:con}
In this paper, we exploit high-order smoothness across the temporal domain and adaptive low-rankness for time-varying graph signal recovery. In particular, we propose a novel graph signal recovery model based on a hybrid graph regularization involving a general order temporal difference, together with an error-function weighted nuclear norm. We also derive an effective optimization algorithm with guaranteed convergence by adopting a reweighting scheme and the ADMM framework. Numerical experiments have demonstrated their efficiency and performance in terms of accuracy. In the future, we will explore using high-order difference schemes to create a temporal Laplacian and low-rankness for recovering graph signals with dynamic graph topology.

\section*{Acknowledgements}
The authors would like to thank the support from the American Institute of Mathematics during 2019-2022 for making this collaboration happen. WG, YL, and JQ would also like to thank the Women in Data Science and Mathematics Research Workshop (WiSDM) hosted by UCLA in 2023 for the support of continuing this collaboration.  YL is partially supported by NSF CAREER 2414705.
JQ is partially supported by the NSF grant DMS-1941197. MY was partially supported by the Guangdong Key Lab of Mathematical Foundations for Artificial Intelligence, the Shenzhen Science and Technology Program ZDSYS20211021111415025, and the Shenzhen Stability Science Program.
\bibliographystyle{unsrt}
\bibliography{ref}

\end{document}